\newcolumntype{L}{>{\centering\arraybackslash}m{3cm}}
\newcommand{\stos}{auto-encoder\xspace}
\newtheorem{lemma}{Lemma}
\newtheorem{theorem}{Theorem}
\title{ADePT: Auto-encoder based Differentially Private Text Transformation}
\author{Satyapriya Krishna \\
  Amazon Alexa \\
  \texttt{satyapk@amazon.com} \\\And
  Rahul Gupta  \\
  Amazon Alexa \\
  \texttt{gupra@amazon.com} \\\And
  Christophe Dupuy  \\
  Amazon Alexa \\
  \texttt{dupuychr@amazon.com} \\}
\begin{document}

\maketitle
\begin{abstract}
Privacy is an important concern when building statistical models on data containing personal information. Differential privacy offers a strong definition of privacy and can be used to solve several privacy concerns \cite{dwork2014algorithmic}. 
Multiple solutions have been proposed for the differentially-private transformation of datasets containing sensitive information. 
However, such transformation algorithms offer poor utility in Natural Language Processing (NLP) tasks due to noise added in the process. 
In this paper, we address this issue by providing a utility-preserving differentially private text transformation algorithm using auto-encoders. Our algorithm transforms text to offer robustness against attacks and produces transformations with high semantic quality that perform well on downstream NLP tasks. 
We prove the theoretical privacy guarantee of our algorithm and assess its privacy leakage under Membership Inference Attacks (MIA) \cite{shokri2017membership} on models trained with transformed data. Our results show that the proposed model performs better against MIA attacks while offering lower to no degradation in the utility of the underlying transformation process compared to existing baselines.
\end{abstract}

\section{Introduction}
Differentially Private (DP) mechanisms provide robustness against privacy attacks and offer practical solutions for transforming and releasing datasets without compromising privacy \cite{dwork2009complexity}. 
A typical downstream task may involve training a machine learning model with data transformed from a differentially private mechanism. 
However, while the DP mechanism offers privacy, it can adversely impact the utility of the trained model \cite{li2009tradeoff}.
Specifically, in the case of text datasets (e.g., those used in Natural Language Understanding(NLU) tasks), if the DP transformation impacts the syntactic structure of the sentence or does not factor in the target NLU label (e.g. intent of the sentence in an intent classification task), the loss in utility can render the use of processed data impractical.
We address this problem in the paper and introduce ADePT - an Auto encoder based Differentially Private Text transformation mechanism that process text data while reducing the impact on the utility of the dataset.
 
The ADePT mechanism relies on text-based auto-encoders (e.g. LSTM based sequence-to-sequence models) for text transformation.
An auto-encoder first transforms a given text input into some latent representation, followed by text generation (transformation) via the decoder. 
In this paper, we prove that the application of clipping and noising operation on the latent sentence representations returned by the encoder followed by text generation by the decoder is a DP mechanism. 
We use ADePT to transform text datasets relevant to the Intent Classification (IC) task, where we predict intent of input sentence (e.g. `BuyTicketIntent' intent prediction for the sentence `buy me a ticket to Seattle' ).
While one can transform the text in datasets and retain original intent labels to train the intent classifier, it is not guaranteed that the transformed text would correspond to the original intent post transformation, which can adversely impact the trained IC's utility. To mitigate this problem, we append the intent labels to the rest of tokens while training the auto-encoder as well as for transforming text with the trained auto-encoder. For instance, {\it @BuyTicketIntent buy me a ticket to Seattle} is used as the input sample to train the autoencoder where { \it @BuyTicketIntent} is the intent annotation for { \it buy me a ticket to Seattle}. Similarly, the intent label is regenerated along with the rest of the tokens after transformation, which is then used for IC training with the regenerated intent as the label of the regenerated sentence.
In addition to this, we argue that data regeneration via decoder maintains the syntactic structure of the sentence since the decoder generates tokens auto-regressively, factoring in the previously generated tokens.
We hypothesize that these properties make ADePT a utility preserving DP mechanism and demonstrate the superiority of the algorithm against an existing baseline \cite{feyisetan2019leveraging}. 

\section{Related Work}
{\bf Differentially private data transformation and generation:}
Researchers have proposed several methods for DP data transformation using individual ranking micro-aggregation \cite{sanchez2016utility}, random projection \cite{xu2017dppro}, and kernel mean embeddings \cite{balog2018differentially}. 
Alternatively, models such as differentially private Generative Adversarial Networks \cite{xie2018differentially} and differentially private autoencoder-based generative model \cite{chen2018differentially} focus on training data generators that guarantee that the data generation mechanism is DP.
While DP data transformation and generation has shown great success for structured data (e.g. numeric tables, histograms), the same for unstructured data (e.g. text) is more challenging.
 \citet{beigi2019not} propose an algorithm that learns numeric text representations that offer guarantees of differential privacy.
However, arguably it may be more desirable to release transformed text as opposed to latent representations. \citet{feyisetan2019leveraging} proposes DP mechanism to transform text data that constructs a hierarchical representation given a sentence to identify {\textit{private phrases}} in the input sentence. 
Each word in the private phrase is then randomly replaced by neighboring word in a word embedding space. 
We use the work by \citet{feyisetan2019leveraging} as baseline in our work since it also focusses on obtaining text transformed text with a DP mechanism. 

{\bf Membership Inference Attacks:}
While the $(\epsilon, \delta)$ bounds provide theoretical quantification of a mechanism's privacy \cite{dwork2014algorithmic}, recently Membership Inference Attack (MIA) success rates have emerged as practical quantification of privacy preservation \cite{shokri2017membership}.
In this work we use the setup suggested by \citet{shokri2017membership} as a method to quantify privacy for models trained on transformed data. 
Given a trained machine learning model and its confidence score on a datapoint, MIA infers whether the datapoint was part of the model's training data. 
In order to conduct MIA, an attacker trains a {\it shadow model} that he/she expects to mimic the {\it target model} under attack.
Once trained, the {\it shadow model's} confidence scores on the datapoints {\it members} of its training set and other { \it non-member} datapoints are used to train the binary attack model.
Given a datapoint, the attacker then extracts a similar vector of confidence scores from the {\it target model} and uses the {\it attack model} to make a {\it member}/{\it non-member} prediction. 

\section{ADePT: Auto-encoder based DiffEerentially Private Text transformation}
Consider an utterance $\mathbf{u}$ drawn from a dataset $\mathcal{D}$.
Furthermore, consider an \stos model that takes input a sentence $\mathbf{u}$ and outputs another sentence $\mathbf{v}$. 
A vanilla \stos model consists of an encoder that returns a vector representation $\mathbf{r} = \textsf{Enc}({\mathbf{u}})$ for the input $\mathbf{u}$, which is then passed onto the decoder that constructs an output $\mathbf{v} = \textsf{Dec}(\mathbf{r})$. 
We define ADePT as a randomized algorithm $\mathcal{A}$, that given an utterance $\mathbf{u}$, generates $\mathbf{v}$ as shown in equation \ref{eq:transform}. 
$\mathbf \eta$ is a vector sampled from either a Laplacian or a Gaussian distribution (with 0 mean and a pre-defined variance).
\vspace{-2mm}

\begin{equation}\label{eq:transform}
 \mathbf{v} = \textsf{Dec}(\mathbf{r}^\prime)
\end{equation}

\vspace{-2mm}
\begin{equation}\label{eq:transform}
\text{Where}\;\; \mathbf{r}^\prime = \textsf{Enc}(\mathbf{u})\cdot\min\Big({1,\frac{C}{||\textsf{Enc}(\mathbf{u})||_2}}\Big)+ \mathbf{\eta}
\end{equation}

\subsection{Proof that ADePT is differentially private}
Given that ADePT conducts a transformation from $\mathbf{u} \xrightarrow{} \mathbf{r}^\prime \xrightarrow{} \mathbf{v}$, we first show that it is sufficient to prove that the transformation from $\mathbf{u} \xrightarrow{} \mathbf{r}^\prime$ is DP for ADePT to be DP. 
Thereafter, we prove that the transformation $\mathbf{u} \xrightarrow{} \mathbf{r}^\prime$ is DP. 

\begin{lemma}
The transformation $\mathbf{u}  \xrightarrow{} \mathbf{v}$ will be at least ($\epsilon, \delta$) differentially private, if the algorithm that transforms $\mathbf{u}$ to $\mathbf{r}^\prime$ is ($\epsilon, \delta$) DP.
\end{lemma}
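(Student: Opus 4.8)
The plan is to recognize this lemma as an instance of the \emph{post-processing invariance} of differential privacy: once $\mathbf{r}^\prime$ has been produced by an $(\epsilon,\delta)$-DP mechanism, any further transformation of $\mathbf{r}^\prime$ that does not re-access the raw input $\mathbf{u}$ — in particular the decoder $\textsf{Dec}$ — cannot degrade the privacy guarantee. Since in ADePT the decoder sees only $\mathbf{r}^\prime$ (together with its own internal randomness, drawn independently of $\mathbf{u}$), the composite map $\mathbf{u}\to\mathbf{r}^\prime\to\mathbf{v}$ inherits the $(\epsilon,\delta)$ bound from the $\mathbf{u}\to\mathbf{r}^\prime$ stage.

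Concretely, I would first fix notation: let $\mathcal{M}(\mathbf{u})$ denote the (assumed $(\epsilon,\delta)$-DP) mechanism producing $\mathbf{r}^\prime$, so that for every pair of adjacent inputs $\mathbf{u},\mathbf{u}^\prime$ and every measurable subset $S$ of the latent space, $\Pr[\mathcal{M}(\mathbf{u})\in S]\le e^\epsilon\Pr[\mathcal{M}(\mathbf{u}^\prime)\in S]+\delta$. I would then treat the deterministic-decoder case: for a measurable set $T$ in the output sentence space, observe that $\{\textsf{Dec}(\mathcal{M}(\mathbf{u}))\in T\}=\{\mathcal{M}(\mathbf{u})\in\textsf{Dec}^{-1}(T)\}$, apply the DP inequality with $S=\textsf{Dec}^{-1}(T)$, and conclude $\Pr[\textsf{Dec}(\mathcal{M}(\mathbf{u}))\in T]\le e^\epsilon\Pr[\textsf{Dec}(\mathcal{M}(\mathbf{u}^\prime))\in T]+\delta$, which is exactly $(\epsilon,\delta)$-DP for $\mathbf{u}\to\mathbf{v}$.

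Next I would extend to the case where $\textsf{Dec}$ is itself randomized (e.g., token sampling rather than $\arg\max$ decoding). The point is that the decoder's randomness is independent of $\mathbf{u}$, so I can write it as $\textsf{Dec}(\cdot\,;\omega)$ for an independent seed $\omega$, apply the deterministic argument for each fixed $\omega$, and then integrate the resulting inequality over the distribution of $\omega$; linearity of expectation carries the constants $e^\epsilon$ and $\delta$ through unchanged, preserving the bound. Equivalently, this is just a direct appeal to the standard post-processing theorem for differential privacy.

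The only real subtlety — and hence the step I would be most careful about — is making the preimage/measurability argument rigorous and stating explicitly that $\mathbf{u}\to\mathbf{r}^\prime\to\mathbf{v}$ forms a Markov chain, i.e.\ $\mathbf{v}$ is conditionally independent of $\mathbf{u}$ given $\mathbf{r}^\prime$; the lemma fails if the decoder were allowed to peek at $\mathbf{u}$ directly. With this data-processing structure made precise, no new probabilistic estimates are required beyond those already assumed for the $\mathbf{u}\to\mathbf{r}^\prime$ transformation, and the remainder of the argument is routine.
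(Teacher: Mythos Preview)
Your proposal is correct and takes essentially the same approach as the paper: both rely on the post-processing invariance of differential privacy, with the paper simply citing Proposition~2.1 of \citet{dwork2014algorithmic} in a one-line proof. Your write-up is a careful unpacking of exactly that proposition, including the Markov-chain caveat that $\textsf{Dec}$ must not re-access $\mathbf{u}$, which is implicit in the paper's construction.
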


\begin{proof}
This is true based on proposition 2.1 on post-processing in 
\citet{dwork2014algorithmic}. 
\end{proof}

\begin{theorem}
If $\mathbf{\eta}$ is a multidimensional noise, such that each element $\eta_i \in \mathbf{\eta}$ is independently drawn from a distribution shown in equation \ref{eq:lap}, then the transformation from $\mathbf{u} \xrightarrow[]{} \mathbf{v}^\prime$ is $(\epsilon, 0)$ DP. 
\end{theorem}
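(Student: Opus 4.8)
The plan is to chain the post-processing lemma with the standard Laplace mechanism. By Lemma~1 it suffices to show that the (randomized) map $\mathbf{u}\to\mathbf{r}^\prime$ is $(\epsilon,0)$ DP, where $\mathbf{r}^\prime = \tilde{\mathbf{r}}(\mathbf{u}) + \mathbf{\eta}$ and $\tilde{\mathbf{r}}(\mathbf{u}) = \textsf{Enc}(\mathbf{u})\cdot\min\!\big(1,\,C/\|\textsf{Enc}(\mathbf{u})\|_2\big)$ is the clipped encoding; since $\mathcal{A}$ acts on a single utterance, the "neighbouring datasets" in the DP definition are simply two arbitrary inputs $\mathbf{u},\mathbf{u}^\prime$, and we must bound $\Pr[\mathbf{r}^\prime\in S\mid\mathbf{u}]$ against $e^\epsilon\Pr[\mathbf{r}^\prime\in S\mid\mathbf{u}^\prime]$ for every measurable $S$.

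The first concrete step is the deterministic fact that clipping caps the representation norm: for every $\mathbf{u}$ we have $\|\tilde{\mathbf{r}}(\mathbf{u})\|_2\le C$ (case split: if $\|\textsf{Enc}(\mathbf{u})\|_2\le C$ the multiplier is $1$, otherwise it equals $C/\|\textsf{Enc}(\mathbf{u})\|_2$ and the clipped vector has norm exactly $C$). By the triangle inequality this gives the sensitivity bound $\|\tilde{\mathbf{r}}(\mathbf{u})-\tilde{\mathbf{r}}(\mathbf{u}^\prime)\|\le 2C$. Next I would write out the Laplace density ratio: taking $\mathbf{\eta}$ coordinatewise i.i.d.\ Laplace with scale $b$ (the distribution in equation~\ref{eq:lap}), the conditional density of $\mathbf{r}^\prime$ factorizes to something proportional to $\exp(-\|\mathbf{r}^\prime-\tilde{\mathbf{r}}(\mathbf{u})\|_1/b)$, so that $\frac{p(\mathbf{r}^\prime\mid\mathbf{u})}{p(\mathbf{r}^\prime\mid\mathbf{u}^\prime)} = \exp\!\big((\|\mathbf{r}^\prime-\tilde{\mathbf{r}}(\mathbf{u}^\prime)\|_1-\|\mathbf{r}^\prime-\tilde{\mathbf{r}}(\mathbf{u})\|_1)/b\big)\le \exp(\|\tilde{\mathbf{r}}(\mathbf{u})-\tilde{\mathbf{r}}(\mathbf{u}^\prime)\|_1/b)$ by the reverse triangle inequality; integrating over $S$ then yields the $(\epsilon,0)$ guarantee as soon as $\|\tilde{\mathbf{r}}(\mathbf{u})-\tilde{\mathbf{r}}(\mathbf{u}^\prime)\|_1 \le b\,\epsilon$, i.e.\ once $b$ is calibrated to the sensitivity.

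The step I expect to be the main obstacle — and the one to state carefully — is reconciling the norm used for clipping with the norm the Laplace mechanism is calibrated to: the clipping operation controls $\|\tilde{\mathbf{r}}\|_2$, whereas the coordinatewise Laplace argument above naturally needs the $\ell_1$ sensitivity $\|\tilde{\mathbf{r}}(\mathbf{u})-\tilde{\mathbf{r}}(\mathbf{u}^\prime)\|_1$. Either the clipping must itself be performed in $\ell_1$ (so the sensitivity is genuinely $2C$ and $b=2C/\epsilon$ works), or one pays a $\sqrt{\dim\mathbf{r}}$ factor via $\|\cdot\|_1\le\sqrt{\dim\mathbf{r}}\,\|\cdot\|_2$ and must set $b$ accordingly; in either case I would make sure the scale in equation~\ref{eq:lap} is exactly the one matching the clipping norm, after which the remainder is the textbook Laplace-mechanism computation plus the post-processing reduction of Lemma~1.
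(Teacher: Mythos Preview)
Your approach is essentially the paper's: reduce via Lemma~1 (post-processing) to the map $\mathbf{u}\to\mathbf{r}^\prime$, bound the sensitivity of the clipped encoding by $2C$, and invoke the Laplace mechanism (the paper simply cites Theorem~3.6 of \citet{dwork2014algorithmic} rather than writing out the density-ratio calculation you sketch).

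The interesting point is that you have been \emph{more} careful than the paper on the very step you flag as the main obstacle. The paper asserts that ``the l1-sensitivity of this function \ldots\ is $2C$ since maximum L1 norm difference between two points in a hyper-sphere of radius $C$ is $2C$,'' and calibrates the Laplace scale accordingly. But, exactly as you note, the clipping in equation~(2) is in $\ell_2$, so the vectors $\tilde{\mathbf{r}}(\mathbf{u})$ live in an $\ell_2$ ball of radius $C$; the maximal $\ell_1$ distance between two such points is $2C\sqrt{d}$ (take $\pm(C/\sqrt{d},\dots,C/\sqrt{d})$), not $2C$. The paper's proof silently conflates the two norms, whereas your proposal correctly identifies the dichotomy: either clip in $\ell_1$ so that the $\ell_1$ sensitivity genuinely is $2C$, or keep $\ell_2$ clipping and pay the $\sqrt{d}$ factor in the Laplace scale. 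This mismatch in the paper has since been pointed out in the literature; your instinct to treat it as the crux is well placed.
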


\begin{equation}\label{eq:lap}
 Lap(\eta_i) \sim \frac{\epsilon}{4C}\exp (-\frac{\epsilon|v_i|}{2C})
\end{equation}

\begin{proof}
We refer the reader to the proof in \citet{dwork2014algorithmic}, Theorem 3.6. 
The function $f(x)$ used in the Theorem in \citet{dwork2014algorithmic} is equivalent to the encoder output with clipping. 
The l1-sensitivity of this function (please refer to definition 3.1 in \citet{dwork2014algorithmic}) is $2C$ since maximum L1 norm difference between two points in a hyper-sphere of radius $C$ is $2C$. 
Replacing $\Delta f$ in Theorem 3.6 in \citet{dwork2014algorithmic} by $2C$, we obtain the that the transformation is $(\epsilon, 0)$ DP.
\end{proof}

Akin to the proof with a Laplacian noise, we can also borrow the proof in Appendix A in \citet{dwork2014algorithmic} to show that ADePT would also be DP if $\mathbf{\eta}$ was a Gaussian noise.

\section{Experimental setup}

We perform an intent classification task in our experiments and quantify impacts on accuracy and privacy metrics after data transformation via the ADePT mechanism.
While the intent classification accuracy quantifies the utility of the transformed dataset, we evaluate success of MIA against the IC model to quantify privacy.
Below, we describe the datasets, auto-encoder and IC model training and the MIA setup used in our experiments. 

\begin{figure*}
\centering
\begin{subfigure}{0.5\textwidth}
\vspace{-6mm}
  \centering
  \includegraphics[width=\linewidth]{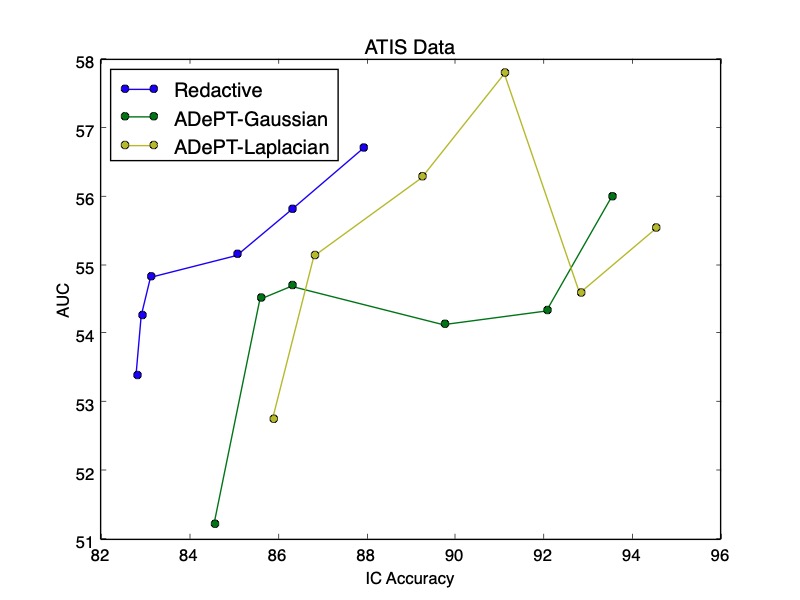}
\end{subfigure}%
\begin{subfigure}{0.5\textwidth}
\vspace{-6mm}
  \centering
  \includegraphics[width=\linewidth]{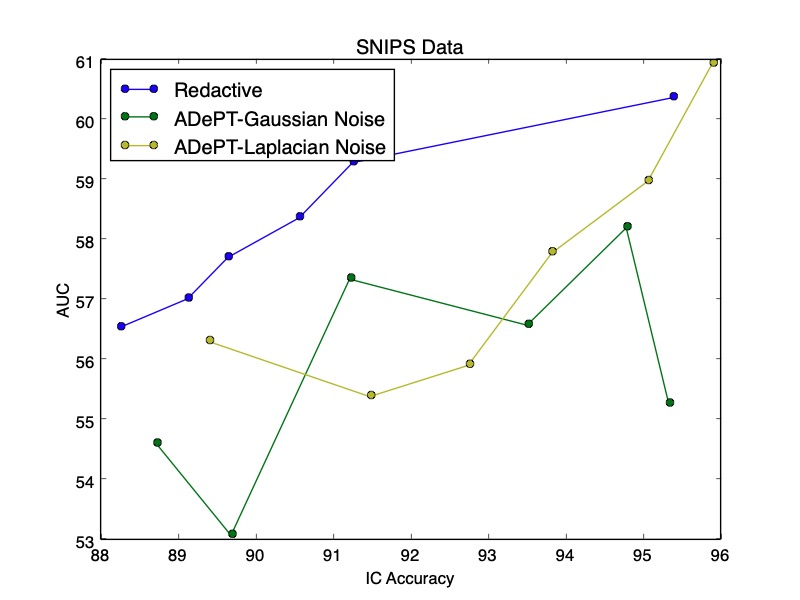}
\end{subfigure}
\vspace{-4mm}
\caption{Privacy and accuracy metrics using baseline and ADePT mechanisms on the ATIS and SNIPS datasets. Baseline mechanism transforms datasets with Laplacian noise with variance values $\in (1,6,9, 15, 28, 100)$. ADePT transforms datasets with Gaussian and Laplacian noises with variances $\in (0.25, 0.5, 0.6, 0.75, 0.85, 1)$. The variance scales are different between the two mechanism due to inherent difference in their construct.}
\label{fig:test}
\end{figure*}

\begin{table*} [!htb]
\centering
\caption{Example of a good and a corrupted output from ADePT}
\vspace{-4mm}
\begin{tabular}{lrl} 
\hline \multicolumn{1}{|m{5cm}|}{\textbf{Original}} & \multicolumn{1}{>{\centering}m{43mm}|}{\textbf{Baseline}} & \multicolumn{1}{>{\centering}m{43mm}|}{\textbf{ADePT}} \\ 
\hline \multicolumn{1}{|m{5cm}|}{what are the flights on \textbf{january first 1992 from boston to san francisco}} & \multicolumn{1}{m{5cm}|}{what are the flights on \textbf{february inhales 1923 from boston to san mostrar}} & \multicolumn{1}{m{5cm}|}{what are the flights on \textbf{thursday going from dallas to san francisco}} \\  \hline \multicolumn{1}{|m{5cm}|}{show all \textbf{ flights boston to any time}} & \multicolumn{1}{m{5cm}|}{show all \textbf{5-minutes distinctions from massachusetts to tempat chiefs}} & \multicolumn{1}{m{5cm}|}{show all \textbf{flights flights flights boston to any time}} \\ \hline
\end{tabular}
\vspace{-4mm}
\label{tab:examples}
\end{table*}

\subsection{Datasets}
We use ATIS \cite{dahl1994} and SNIPS \cite{coucke2018} for training IC models on the respective datasets.
The ATIS dataset consists of $\sim$5.5k data samples, while the SNIPS dataset consists of $\sim$14.5k data samples.
We used a 50:50 split for training and evaluation sets.
Apart from offering a larger accuracy evaluate test set, a 50:50 split also ensures that we have a balanced training and evaluation sets for MIA, as discussed in Section~\ref{sec:mia}. 

\subsection{Training the auto-encoder model}
Given utterances $\mathbf{u}$ in the training partition of the datasets of interest, we train an \stos model to reconstruct the input utterance $\mathbf{u}$ via the decoder $\textsf{Dec}$.
In our case, the auto-encoder is a sequence to sequence model, where both encoder and decoders are uni-directional LSTM models.
We train the auto-encoder on the training portions of the ATIS and SNIPS datasets, with an objective to reconstruct the input sentence through the {\it latent} representation.
Note that during training, we apply clipping to ensure that the latent representation are encouraged to reside within a hyper-sphere of radius $C$, no noise is added to the latent representation.
Clipping and noising operations are applied during the final transformation after the auto-encoder is trained, as discussed in section~\ref{sec:transformation}. 

\subsubsection{Making ADePT utility preserving}
In the proof above, we show that ADePT is DP algorithm that transforms input utterances $\mathbf{u}$ to $\mathbf{v}$.
For the purposes of training an intent classifier, a naive scheme can assume that the intent label applied to the utterance $\mathbf{u}$ is also applicable to $\mathbf{v}$.
However, this assumption may not always be true as the transformation may render utterance $\mathbf{v}$ to carry a different intent label than $\mathbf{u}$. 
In order to encourage the transformed utterances $\mathbf{v}$ to conform to the intent label for utterance $\mathbf{u}$, and also obtain the correct intent label in cases where the transformation may lead $\mathbf{v}$ to belong to a different intent, we tweak the \stos model to also ingest the intent label.
We train annotation aware \stos models with inputs/outputs as utterances and the corresponding intent.
The intent label is appended to the beginning of each utterance (demarcated with a special character to help distinguish the intent names with utterance tokens) during the auto-encoder training.

\subsection{Data transformation}
\label{sec:transformation}
Once the \stos model is trained, we apply the transformation again on the training portions of ATIS and SNIPS datasets.
During the transformation, the intent token is appended with the rest of the utterance and an output in a similar format is expected. 

\subsection{Intent classifier training}
The ADePT transformation yields the altered sentence, along with an intent.
We transform the training portion of ATIS and SNIPS datasets through the autoencoder and use the altered sentences along with the reproduced intent for training an intent classifier.
Our IC architecture is inspired from \citet{CLC2016} and consists of three blocks: (i) an embedding block consisting of word and character embeddings, (ii) a block consisting of bi-directional LSTM layers and, (iii) a fully connected network operation on a max-pool of LSTM layer outputs for intent classification.

\subsection{Privacy evaluation using MIA}
\vspace{-1mm}
\label{sec:mia}

We train the attack model on confidence scores returned by a shadow IC model trained similarly as the target IC model. 
We extract scores for the top five intents returned by the shadow IC model on the {\it member} and {\it non-member} sentences used to train the shadow IC model.
The attack model is a binary logistic regression model, trained on the extracted IC scores from `member' and 'non-member' sentences.

During the attack, top 5 intent scores from the target IC model are fed to the logistic regression model to make a prediction whether the corresponding scores belong to the target model's {\it member} or {\it non-member} data. 
While the {\it member} sentences are sourced from the training set, we borrow {\it non-member} sentences from the test set used to evaluate the model accuracy (note that their counts are balanced as we use a 50:50 split).
We use the Area Under the ROC curve (AUC) to evaluate the success of the attack model and a higher AUC implies worse privacy metric.

\vspace{-1mm}

\section{Experimental Results}
We conduct ADePT transformation using both Laplacian and Gaussian noises, with different variance values.
The baseline mechanism also uses a Laplacian noise to sample words replacements for the private words. 
Figure~\ref{fig:test} show the MIA success rates and IC accuracies obtained on ATIS and SNIPS data respectively.
Note that the algorithm with a lower AUC and a higher IC accuracy is desirable.
We observe that as we sweep the noise parameters for the ADePT and Redactive mechanisms, we generally obtain lower AUC with a higher IC accuracy for the former. Additionally ADePT mechanism with a Gaussian noise performs the best.
This empirical observation supports our hypothesis that factoring in the intent label during ADePT based transformation helps providing better utility.  

However, we also note that the privacy-utility trade-off in ADePT can be non-monotonic.
We noticed that the sentence transformation using encoders is sensitive to noise value added to encoded representation $ \textsf{Enc}(\mathbf{u})$. The clipping and noise additional has potential to change the entire sentence, unlike the baseline, where the public phrase in the utterance remains unaltered and only the private phrases in the utterances are subject to alteration.
We show two examples of sentence transformation using the baseline and Gaussian ADePT mechanism in Table ~\ref{tab:examples}.
In particular, the decoder tends to repeat the same word multiple times for corrupted outputs which can be corrected with constrained decoding.

\section{Conclusions}
We propose ADePT - an \stos based DP algorithm in this paper.
We theoretically prove that the mechanism is DP and demonstrate that it offers a better privacy utility trade-off compared to a baseline that relies on detecting the transforming public phrases in a sentence.
In the future, we will extend ADePT to transforming datasets with sequence level tags (for instance, in named entity recognition tasks) and also use non-autoregressive decoders (e.g. transformers).
We will also extend the mechanism to other modalities (e.g. Image) using auto-encoder models in the corresponding domains. 

\bibliography{anthology,eacl2021}
\bibliographystyle{acl_natbib}

\end{document}